\setlist[enumerate]{leftmargin=.5in}
\setlist[itemize]{leftmargin=.5in}
\definecolor{mediumtealblue}{rgb}{0.0, 0.33, 0.71}
\def\inew#1{\textcolor{mediumtealblue}{#1}}
\def\inrule{ \specialrule{.05pt}{2pt}{2pt} }
\let\OldXi\Xi
\def\Xi{\mathit \OldXi}
\def\RR{\mathbb{R}}
\DeclareMathOperator{\Diag}{Diag}
\newcommand{\D}{\mathcal D}
\def\HyperNSM{\texttt{HyperNSM}}
\def\GraphNSM{\texttt{GraphNSM}}
\def\BorgattiEverett{\texttt{Borgatti-Everett}}
\def\UMHS{\texttt{UMHS}}
\title{Core-periphery detection in hypergraphs\thanks{The work of D.J.H. was supported 
the Engineering and Physical Sciences Research Council 
under grants EP/P020720/1 and EP/V015605/1.\\[1em]}}
\author{Francesco Tudisco\thanks{School of Mathematics, Gran Sasso Science Institute, 67100 L'Aquila, Italy 
  (\email{francesco.tudisco@gssi.it})}
\and Desmond J. Higham\thanks{School of Mathematics, University of Edinburgh, EH93FD Edinburgh, UK
  (\email{d.j.higham@ed.ac.uk})}
  }
\begin{document}

\maketitle

\begin{abstract}
    Core-periphery detection is a key task in exploratory network analysis where one aims to find a core, a set of nodes well-connected internally and with the periphery, and a periphery, a set of nodes connected only (or mostly) with the core. In this work we propose a model of core-periphery for higher-order networks modeled as hypergraphs and we propose a method for computing a core-score vector that quantifies how close each node is to the core.  In particular, we show that this method solves the corresponding non-convex core-periphery optimization problem globally to an arbitrary precision. This method turns out to coincide with the computation of the Perron eigenvector of a nonlinear hypergraph operator, suitably defined in term of the incidence matrix of the hypergraph, generalizing recently proposed  centrality models for hypergraphs.  
    We perform several experiments on synthetic and real-world hypergraphs showing that the proposed method outperforms alternative core-periphery detection algorithms, in particular those obtained by transferring established  graph methods to  the hypergraph setting via clique expansion.
\end{abstract}

% REQUIRED
\begin{keywords}
  hypergraph partitioning, nonlinear Laplacian, Perron-Frobenius, power method, core-periphery
\end{keywords}

% REQUIRED
\begin{AMS}
  65F30, 05C65, 65F15
\end{AMS}

\section{Introduction}
Finding core-periphery structures in networks represented as graphs is an important task in exploratory networks analysis \cite{borgatti2000models,cucuringu2016detection,rombach2017core,tudisco2019core}. Core-periphery structure has been detected and interpreted in many complex systems, including protein-protein interaction networks \cite{kim2007positive}, metabolic and gene regulatory networks \cite{sandhu2012large}, social  networks \cite{banos2013diffusion,borgatti2000models},  engineered  networks   such  as the  Internet,  power-grids  or  transportation  networks \cite{tudisco2019core},  and  
economic networks \cite{tomasello2017rise}. See also the review \cite{csermely2013structure}. 

% By interpreting the edges of the graph as the nonzero entries of a square matrix, the problem also coincides with the optimal L-shaped reordering of the entries of a matrix \cite{tudisco2019core}, which can be useful in fast approximate matrix factorizations and preconditioning \cite{saad?}. 

In a graph, a core set is defined as a set of nodes which has many connections both internally and outside the set, while the periphery is a set of nodes that only (or mostly) connects to the core. Partitioning the graph into core and periphery is reminiscent of other graph partitioning problems, the most popular being graph clustering, where one seeks  two sets that are only (or mostly) connected internally. While this formulation of core-periphery detection is a binary classification problem,  real-world complex networks modeled as graphs rarely allow a clear-cut core-periphery partitioning of the nodes; more frequently we can expect a `smooth' transition between core and periphery (see e.g.\ \cite{rombach2017core,tudisco2019core}). Mathematically, this translates into the problem of assigning a core-periphery score (or simply `core-score') to the nodes of the graph, which indicates to what extent each node is peripheral or core. From this point of view, core-periphery detection can be interpreted as the problem of finding the most `central' nodes in a graph and, in fact, centrality-based core-periphery detection approaches have been considered, see e.g.\ \cite{csermely2013structure,da2008centrality,mondragon2016network}.

Recent years have seen a growth in interest towards hypergraphs and in general higher-order graph models that directly account 
for multiple node interactions that take place simultaneously, see e.g.\ \cite{battiston2020beyond,benson2016higher,bianconi2021,torres2020why}.  %Examples include \ftnote{add here}.
Moving from a graph to a 
hypergraph allows us to retain more information in many natural, social and data systems, including email exchanges, group messaging, meeting
attendance, document coauthorship and supermarket baskets. 
However, this richer framework presents new challenges
in terms of defining appropriate 
concepts and in the design and analysis of efficient computational algorithms.

If we are interested in separating two node sets by measuring the way these sets interact in a hypergraph, a multitude of definitions are possible. In the context of node clustering, for example, different notions of cuts can be considered leading to different hypergraph cut algorithms. While the cut between two node sets $S$ and $T$ in the graph case is uniquely measured as the number (or the sum of the weights) of the edges connecting $S$ and $T$, in the hypergraph setting one has several choices. A commonly adopted and successful definition, sometimes referred to as `all-or-nothing' hypergraph cut, measures the cut by counting how many hyperedges have at least one node in $S$ and one node in $T$ \cite{hein2013total}.  Another relatively standard cut function measures the cut in the projected graph obtained replacing each hyperedge with a clique (i.e.\ the clique-expanded graph) \cite{zhou2007hypergraph}. Other approaches propose different ways to weight the proportion of nodes in $S$ and $T$ per each hyperedge  \cite{veldt2020hypergraph}.

Similarly, in the core-periphery context, the number of hyperedges between core and periphery can be counted in several ways. As for the hypergraph cut, one may count hyperedges as showing a core-periphery behaviour if at least one node in the hyperedge is in the core. This is the approach we use in this work, were we additionally  weight each hyperedge in terms of the number of nodes it contains (which allows us to e.g.\ penalize very large hyperedges). In a way, our approach is the core-periphery analog of the `all-or-nothing' hypergraph cut function. However, unlike the hypergraph cut problem, we show that the corresponding non-convex core-periphery optimization problem can be solved globally to an arbitrary precision via a nonlinear eigenvector approach. While a similar globally convergent method is available for graphs \cite{tudisco2019core}, we find that direct use of the graph method applied to the  clique-expanded graph leads to  inefficient core-periphery detection in hypergraphs. This is in stark contrast 
to the cut setting, where hypergraph cut functions based on the clique-expanded graph typically show decent performance.   

Our method computes a bespoke core-score vector for hypergraphs as the positive solution of an eigenvalue problem for a suitable hypergraph Laplacian-like operator $L(x) := Bg(B^\top f(x))$, where $B$ is the hypergraph incidence matrix and $f,g$ are entrywise nonlinearities.
%
%In particular, we show that the core-periphery detection problem on an hypergraph boils down to an eigenvector problem for a nonlinear hypergraph Laplacian operator of the form $x\mapsto L(x) := Bg(B^\top f(x))$, where $B$ is the hypergraph incidence matrix and $f,g$ are nonlinarities. 
It is particularly interesting that this type of nonlinear Laplacian operator appears in many settings. For example, in the graph case, if  $f=\mathrm{id}$ and $g(x) = |x|^{p-1}\mathrm{sign}(x)$, then $L$ boils down to the graph $p$-Laplacian operator \cite{buhler2009spectral,elmoataz2008nonlocal,saito2018hypergraph,upadhyaya2021self}. Exponential- and logarithmic- based choices of $f$ and $g$ give rise to nonlinear Laplacians used to model  chemical reactions \cite{rao2013graph,van2016network} as well as to model consensus dynamics and opinion formation in hypergraphs \cite{neuhauser2021consensus}. Trigonometric functions such as $g(x) = \sin(x)$ are used to model network oscillators  \cite{battiston2021physics,millan2020explosive,schaub2016graph}. Entrywise powers and generalized (power) means are used for node classification  \cite{arya2021adaptive,ibrahim2019nonlinear,prokopchik2021nonlinear,tudisco2021nonlinear}, network centrality and clustering coefficients \cite{arrigo2020framework}.

In particular, in \cite{tudisco2021nonlinear} this type of hypergraph mapping is used to define generalized eigenvector centrality scores for hypergraphs which include as special cases hypergraph centralities based on tensor eigenvectors \cite{benson2019three}. Thus,  the proposed hypergraph core-periphery score can be interpreted as a particular hypergraph centrality designed specifically for core-periphery problems and gives mathematical support to the intuition that centrality measures for hypergraphs may be an indication of core and periphery~\cite{amburg2021planted}. 

% \ftnote{Say that our method boils down to a generalized centrality model and thus we generalize the approach considered in \cite{amburg2021planted} where Z and H eigenvectors are considered}.

The paper is structured as follows.
In \S\ref{sec:mot} we motivate the core-periphery concept.
\S\ref{sec:model} introduces a random model for generating 
 hypergraphs with 
core-periphery structure and relates the model to a maximum likelihood 
problem.
This is used to justify the optimization problem that we propose and analyse in \S\ref{sec:optimization}.
In
\S\ref{sec:graph} we briefly discuss relevant work in the 
graph setting, and in particular we review how the 
clique expansion can be used as a means to 
approximate a hypergraph with a graph.
Computational experiments are presented in \S\ref{sec:exp} and
concluding remarks are given in \S\ref{sec:conc}.
Appendix~\ref{app:proofs} contains a proof of our underpinning theoretical result.

The main contributions of this work are
\begin{itemize}
\item the optimization formulation (\ref{eq:cp_opt}) for core-periphery detection at the higher-order hypergraph level, motivated by the generative random hypergraph model
(\ref{eq:mod}) (see Theorem~\ref{thm:opt}),
\item existence and uniqueness theory for this optimization problem,
and a practical, globally convergent iteration scheme (Theorem~\ref{thm:main}),
\item an interpretation of this approach as a nonlinear spectral method
(Corollary~\ref{cor:spectral}),
\item a core-periphery profile definition for hypergraphs, allowing methods to be compared on real data sets (\S\ref{subsec:profile}).
\end{itemize}

\section{Motivation}
\label{sec:mot}
We consider a hypergraph $H = (V,E)$ with $|V|=n$ nodes and $|E|=m$ hyperedges.
We assume that the nodes and hyperedges are labelled from $1$ to $n$ and 
from $1$ to $m$, respectively.
So, each hyperedge $e \in E$ has the form 
$(i_1,\ldots,i_r)$, 
where $1 \le i_1 < i_2 < \cdots < i_r \le n$, and we refer to $r$, the number of nodes present, as the size of the hyperedge.
We let $B \in \RR^{n \times m}$ denote the corresponding 
incidence matrix, so 
$B_{ij} = 1$ if node $i$ is present in hyperedge $j$, and 
$B_{ij} = 0$ otherwise.

We are interested in the case where interactions
can be described via a core-periphery mechanism, and we seek an 
algorithm that can uncover this structure when it is present in the data.
Loosely, core nodes are those that enable
interactions, whereas peripheral nodes  
may only take part in an interaction if at least one core
node is also present.
Suppose first that the nodes may be split into 
disjoint subsets: the core and the periphery. 
In the strictest interpretation, we could argue that 
the resulting core-periphery hypergraph 
will consist of precisely those hyperedges that contain 
at least one core node.
More generally, we may argue that 
a hyperedge is more likely to arise if it contains 
at least one core
node. 
Moreover, we may extend the idea further to 
argue that a hyperedge (of a given size)  
is more likely to arise  
if it involves more core nodes.
A second direction in which we can relax this definition is to 
argue that instead of a binary split, the nodes can be 
ranked in terms of their coreness.

Using these ideas, which are discussed further in \S\ref{sec:graph},
we build on
notions that have been proposed and tested in the standard network context to 
motivate the new model and algorithm.

\section{Core-periphery hypergraph random model}
\label{sec:model}
We now propose a model that generates hypergraphs with a planted core-periphery structure, generalizing the logistic core-periphery random model for graphs introduced in \cite{tudisco2019core}. Rather than a binary classification model where a node is first assigned to the core or periphery, the proposed random model generates hypergraphs where a smooth transition between the core and the periphery sets is allowed.  As discussed in \cite{tudisco2019core}, we find this situation more realistic, as real-world interactions with a perfectly clear cut
separation between core and periphery sets are rare. 

We assume each node is assigned an integer index $i\in \{1,\dots,n\}$ 
with the convention that the planted core structure starts `at the top', i.e., we want nodes with smaller indices to be in the core and those with large indices to be peripheral.
Let $\sigma(\cdot)$ denote the sigmoid function,  so that 
$\sigma(x) = 1/(1+e^{-x})$. We introduce and study here the generative model with a fixed number of nodes and 
a fixed upper limit on the maximum hyperedge size, 
where each hyperedge $e$ exists with independent probability
\begin{equation}
P(e\in E) =  \sigma \big( \xi(e)  \mu(e) \big). 
\label{eq:mod}
\end{equation}
Here $\xi(e)$ is a function that decreases with the size of the hyperedge, for example $\xi(e) = 1/|e|$, and $\mu(e)$ is a function that attains large values if the nodes in $e$ are near the core, i.e.\ they have small indices. For $q\geq 1$, an example choice of $\mu$ is
$$
\mu(e) = \mu_q(e) := \Big(\sum_{i\in e}{\Big(\frac {n-i}n\Big)^q}\Big)^{1/q}, 
$$
which is a smooth approximation of $\mu(e) = \max_{i\in e}\frac{n-i}{n}$, obtained for $q\to \infty$. Notice that $\mu_q$ corresponds to the $q$-norm of the vector with entries $(1-1/n, 1-2/n, \dots)$ restricted to the hyperedge $e$. We will often consider such $\mu_q$ in practice as it is directly connected to the optimization framework described in \S\ref{sec:optimization}. 

\begin{figure}[t]
    \centering
    \includegraphics[width=\textwidth]{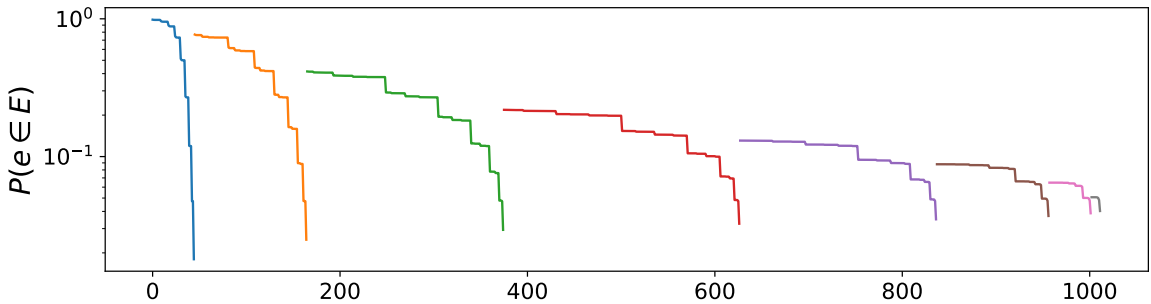}
    \caption{Hyperedge probability  distribution $P(e\in E)$ as defined in \eqref{eq:mod}, for a random hypergraph with $n=10$ nodes and  $2^n-(n+1) = 1013$ possible hyperedges. Here, $\xi(e) = 1/|e|$ and $q = 10$.}
    \label{fig:edge_probabilities}
\end{figure}
With this choice of $\mu$ and $\xi$, a hyperedge is more likely to exist when it contains a small number of nodes, at least one of which is part of the core. Figure \ref{fig:edge_probabilities} illustrates the behaviour of $P$ as a function of the edge $e$ on a random hypergraph with $n=10$ nodes and $2^n-(n+1) = 1013$ possible hyperedges given by all possible subsets $e$ of $\{1,\dots,n\}$ with $|e|\geq 2$.  In the figure, the hyperedges on the $x$-axis are sorted in lexicographical order and grouped by their size: 
\def\r#1{\rotatebox[origin=t]{67}{$#1$}}
$$
\small
\r{\{1,2\}}\r{\{1,3\}}\dots\r{\{1,n\}}\r{\{2,3\}}\r{\{2,4\}}\dots\r{\{2,n\}}\qquad \r{\{1,2,3\}} \r{\{1,2,4\}}\dots\r{\{1,2,n\}}\r{\{1,3,4\}}\r{\{1,3,5\}}\dots\r{\{1,3,n\}} \dots 
$$
We use different colors to distinguish between edge size.

We now turn to the inverse problem where it is required 
to discover a node ordering 
 that 
reveals core-periphery structure. 
We will let ${\cal P}$
denote the set of all permutations of $\{1,2,\ldots,n\}$. 
So $v \in {\cal P}$ is a vector $v \in \RR^n$ with distinct
integer elements between $1$ and $n$.
We will associate 
$v \in {\cal P}$ with a node reordering  such that 
node $i$ is mapped to node $v_i$.
Hence, if $v_j = 1$ and $v_k = n$, then,  according to this new ordering,
node $j$ is the most core and node $k$ 
is the  most peripheral.

In this framework, given a hypergraph, it is reasonable to choose an ordering that maximizes the 
likelihood of the hypergraph arising 
under the model 
(\ref{eq:mod}).
This viewpoint has been found useful for various graph models and structures \cite{GHZ21,RDRG}.
In the next theorem, we show that the resulting maximum likelihood 
problem can be converted into a 
discrete optimization problem that turns out to be
amenable to relaxation.

\begin{theorem}
\label{thm:opt}
For a given hypergraph,
a permutation vector $v \in {\cal P}$ corresponds to a maximum likelihood node reordering under the model 
(\ref{eq:mod}) 
if and only if
it maximizes the objective function
$
\sum_{e \in E} \xi(e) \mu(e)
$.
\end{theorem}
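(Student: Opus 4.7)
The plan is to write the likelihood explicitly under the independence structure of (\ref{eq:mod}) and then split the log-likelihood into a permutation-dependent and a permutation-invariant piece using standard sigmoid identities.

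First I would let $\mathcal{U}$ denote the set of all admissible hyperedges on $V$ (all node subsets of size at least $2$ and at most the prescribed maximum). For a permutation $v \in \mathcal{P}$, write $v(e) := \{v_i : i \in e\}$ for the relabelled hyperedge. Since under the model each hyperedge is an independent Bernoulli trial, the likelihood of observing the hypergraph $H=(V,E)$ under the ordering $v$ factorises as
\begin{equation*}
L(v) \;=\; \prod_{e \in E} \sigma\bigl(\xi(v(e))\,\mu(v(e))\bigr) \; \prod_{e \in \mathcal{U}\setminus E} \bigl(1-\sigma(\xi(v(e))\,\mu(v(e)))\bigr).
\end{equation*}
Next I would take logarithms and use the two elementary identities $\log \sigma(x) = x - \log(1+e^x)$ and $\log(1-\sigma(x)) = -\log(1+e^x)$. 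Writing $\phi(e;v) := \xi(v(e))\mu(v(e))$ for brevity, the two sums combine to give
\begin{equation*}
\log L(v) \;=\; \sum_{e \in E} \phi(e;v) \;-\; \sum_{e \in \mathcal{U}} \log\bigl(1+e^{\phi(e;v)}\bigr).
\end{equation*}

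The crucial observation is then that, because $v$ is a bijection on $V$, the induced map $e \mapsto v(e)$ is a bijection on $\mathcal{U}$. Hence the second sum above is merely a reordering of the terms $\{\log(1+\exp(\xi(e)\mu(e)))\}_{e\in\mathcal{U}}$ and is therefore independent of $v$. Maximizing $\log L(v)$ over $v \in \mathcal{P}$ therefore reduces to maximizing $\sum_{e\in E}\phi(e;v)$, which — after identifying each observed hyperedge with its relabelled copy in the notation of the theorem — is exactly $\sum_{e \in E}\xi(e)\mu(e)$, as claimed.

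I do not anticipate any substantive obstacle: the only delicate point is bookkeeping, namely keeping clear the distinction between the ambient set $\mathcal{U}$ of \emph{possible} hyperedges (which is stable under relabelling, and on which the normalising sum therefore lives) and the \emph{observed} set $E$, whose contribution is the only permutation-sensitive term. If one wished to allow $\xi$ or $\mu$ to depend explicitly on node labels rather than only through $v(e)$, the same argument goes through provided the total sum over $\mathcal{U}$ remains invariant under the action of $\mathcal{P}$, a property that evidently holds for the natural choices $\xi(e) = 1/|e|$ and $\mu=\mu_q$.
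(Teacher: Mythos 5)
Your proposal is correct and follows essentially the same route as the paper: both factor out the permutation-invariant product over all admissible hyperedges (yours in log form via $\log\sigma(x)=x-\log(1+e^{x})$ and $\log(1-\sigma(x))=-\log(1+e^{x})$, the paper's via $\sigma(x)/(1-\sigma(x))=e^{x}$), leaving $\sum_{e\in E}\xi(e)\mu(e)$ as the only permutation-dependent term. Your explicit bijection argument for why the normalising sum over $\mathcal{U}$ is invariant under relabelling is a point the paper merely asserts, but it is the same proof.
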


\begin{proof}
Under the model (\ref{eq:mod}), the likelihood is
\[
\prod_{e \in E} \sigma\big( \xi(e) \mu(e) \big) \, 
 \prod_{e \in E'} \left( 1 - \sigma\big(\xi(e) \mu(e)\big) \right),
 \]
 where 
 $E'$ denotes the complement of $E$; that is, the hyperedges that are not present.
We may rewrite this likelihood as 
\[
\prod_{e \in E}
\frac{
\sigma\big( \xi(e) \mu(e) \big)
}
{
1 - \sigma\big( \xi(e) \mu(e) \big)
}
\,  
 \prod_{\text{all~edges}} \left( 1- \sigma\big( \xi(e) \mu(e) \big)  \right).
 \]
 Now, the second product is independent of the node ordering. So we solve the problem by maximizing the first product, which may be written
 \[
\prod_{e \in E}
e^{\xi(e) \mu(e)}.
 \]
 Finally, taking a logarithm shows that the original problem is equivalent to maximizing 
 $
\sum_{e \in E} \xi(e) \mu(e)
$.
   \end{proof}

\section{Core-periphery detection via cost function optimization}\label{sec:optimization}
Motivated by Theorem~\ref{thm:opt}, 
we propose a model based on the optimization of a nonconvex core-periphery cost function.  Our goal is to determine a core-periphery nonnegative score vector $x^\star$ that assigns large value to nodes in the core and small values to those in the periphery. Clearly, such a vector is `scale invariant' in the sense that any positive rescaling of a core-periphery score vector corresponds to the same core-periphery assignment. For this reason,  given a hypergraph $H = (V,E)$, we formulate the core-periphery detection problem as the following norm-constrained optimization problem:
\begin{equation}\label{eq:cp_opt}
    \max_x f(x) \quad  \text{s.t. } \|x\|_p = 1 \text{ and } x\succeq 0,
    \quad \text{ where }
    \quad 
f(x) = \sum_{e \in E} \, \xi(e) \, \|x|_e\|_{q} \, ,
\end{equation}
$x \succeq 0$ denotes a vector with nonnegative entries, and $x|_e$ denotes the restriction of $x$ to the nodes in the hyperedge $e$, i.e.\ $x|_e =(x_{i_1}, \dots, x_{i_r})$ if $e = (i_1,\dots,i_r)$. As before,  $\xi(e)$ is a function that assigns a weight to the hyperedges in $H$. In practice, $\xi$ may be both `data-driven',  in the sense that it may 
incorporate a weight $w(e)$ of the hyperedge in the input dataset, and `model-oriented', in the sense that it should decrease with the size of the hyperedges (e.g., $\xi(e)$ proportional to $1/|e|$) so to take into account our modeling assumption that, in the hypergraph core-periphery model, hyperedges with more nodes should make a smaller contribution to the assignment of the core-score. 

Note that, as $x\succeq 0$, when $q$ is large we have $\|x|_e\|_{q}\approx \max_{i \in e} x_i$ and thus $f(x)$ is large if many hyperedges contain at least one node with large core-periphery score value. This may be interpreted as the core-periphery analog of the widely used `all-or-nothing' definition of hypergraph cut function in the context of hypergraph clustering and, in general, of cut-based hypergraph problems \cite{hein2013total,veldt2020hypergraph}. However, similarly to the hypergraph cut setting, when considering a core-periphery score one may want to account for the fact that hyperedges may contain more than  one core node, and give these hyperedges a greater importance.
We note that this is somewhat automatically obtained by our choice of smooth function $f$. In fact, while in the graph setting  the non-smooth limit case $q\to \infty$ is to be preferred \cite{tudisco2021nonlinear} as each edge contains exactly either one, two or no core nodes, 
we argue that large but finite values of $q$ are better suited to  hypergraphs. In fact, when $1\ll q<\infty$, the cost function $f(x)$ naturally handles possible ambiguity due to the presence of hyperedges with more than two nodes in the core: While the infinity norm $\|x|_e\|_\infty$ is large if there is at least one core node in $e$ but ignores the presence of  a larger number of core nodes, $\|x|_e\|_q$ (for large but finite $q$), is large when there is at least one core node in $e$  but grows when the hyperedge contains a larger  number of such nodes. 

% However, while in the graph setting  the non-smooth limit case $q\to \infty$ is to be preferred \cite{tudisco2021nonlinear} as each edge contains exactly either one, two or no core nodes, large but finite values of $q$ are better suited for hypergraphs. In fact, when $1\ll q<\infty$, the cost function $f(x)$ somewhat naturally handles possible ambiguity due to the presence of hyperedges with more than two nodes in the core. This is due to the fact that, while the infinity norm $\|x|_e\|_\infty$ is large if there is at least one core node in $e$ but ignores the presence of  two (or more) core nodes, $\|x|_e\|_q$ (for large but finite $q$), is large when there is at least one core node in $e$  and it grows when the hyperedge contains a larger number number of such nodes. 

Although \eqref{eq:cp_opt} is a nonconvex optimization problem, we show below that if we are interested in entry-wise positive solutions, then it admits a unique solution which we can always compute to an arbitrary accuracy via a linearly convergent method, provided that $p>q$. We move the relatively long proof of this result to Appendinx \ref{app:proofs}.

\begin{theorem}\label{thm:main}
If $p>q>1$ then \eqref{eq:cp_opt} has a unique entry-wise positive solution $x^\star$. Moreover, for any entrywise positive starting vector, 
the iterative scheme
\begin{itemize}
    \item $y \gets \Diag(x)^{q-1}  B \Xi(B^\top x^q)^{\frac 1 q-1}$
    \item $x \gets (y/\|y\|_{p^*})^{\frac 1 {p-1}}$
\end{itemize}
where $p^* = p/(p-1)$ is the H\"older conjugate of $p$ and where $\Xi$ is the  diagonal $|E|\times |E|$ matrix with diagonal values $\xi(e)$, 
converges to $x^\star$  with the linear rate of convergence $O(|q-1|/|p-1|)$. 
\end{theorem}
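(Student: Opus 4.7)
The plan is to identify the iteration as a normalized fixed-point iteration for the map
\[
T(x) \;=\; \Big(\nabla f(x)\big/\|\nabla f(x)\|_{p^*}\Big)^{1/(p-1)}
\]
on the positive part of the $\ell_p$-unit sphere, and to prove that $T$ is a strict contraction in the Hilbert projective metric $d_H$ on the positive cone with contraction ratio of order $(q-1)/(p-1)$. First I would verify by direct differentiation of $f(x)=\sum_e \xi(e)\,\|x|_e\|_q$ that the intermediate vector $y$ in the iteration is exactly $\nabla f(x)$, and that the first-order KKT conditions for \eqref{eq:cp_opt} with Lagrange multiplier $\lambda$ for the constraint $\|x\|_p=1$ read $\nabla f(x)=\lambda\,x^{p-1}$ (entrywise); rearranging gives $x=(\nabla f(x)/\lambda)^{1/(p-1)}$, and imposing $\|x\|_p=1$ forces $\lambda=\|\nabla f(x)\|_{p^*}$. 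Hence positive critical points of \eqref{eq:cp_opt} coincide exactly with fixed points of $T$.

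The central step is the Hilbert-metric contraction estimate. I would invoke two classical facts: the entrywise power $u\mapsto u^\alpha$ scales $d_H$ by exactly $|\alpha|$, and any positive linear map is non-expansive in $d_H$ by Birkhoff's theorem. Writing the components of $\nabla f$ as
\[
(\nabla f(x))_j \;=\; \sum_{e:\,j\in e}\,\xi(e)\,\big(x_j/\|x|_e\|_q\big)^{q-1},
\]
I would decompose $T$ through the chain $x\mapsto x^q\mapsto B^\top x^q\mapsto (B^\top x^q)^{(1-q)/q}\mapsto B\,\Xi\,(\cdot)$, followed by pointwise multiplication by $\Diag(x)^{q-1}$, the entrywise power $(\cdot)^{1/(p-1)}$, and $\ell_p$-normalization. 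Tracking the Hilbert-metric rates along the chain yields $d_H(T(x),T(y))\le \kappa\, d_H(x,y)$ with $\kappa=O((q-1)/(p-1))$. Since $p>q$, the rate is strictly below $1$, so Banach's fixed-point theorem applied on the positive cone modulo positive scaling (a complete metric space under $d_H$) gives a unique positive fixed point $x^\star$ with $\|x^\star\|_p=1$ and linear convergence at the claimed rate. A short monotonicity check, using the $1$-homogeneity of $f$ and H\"older's inequality to show $f(T(x))\ge f(x)$ along the iterates, then identifies $x^\star$ with the unique positive global maximizer of \eqref{eq:cp_opt}.

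The main obstacle is the sharp Hilbert-metric contraction estimate for $\nabla f$. The delicate point is that in the expression for $(\nabla f(x))_j$ above, the outer factor $x_j^{q-1}$ and the inner factor $\|x|_e\|_q^{1-q}$ are correlated in $x$; a naive triangle-inequality bound on their separate contributions to $d_H$ gives a contraction constant of order $2(q-1)$ rather than $q-1$, and the former would only ensure contraction in the restricted range $p>2q-1$. Exploiting the constraint $j\in e$ (which forces $x_j\le \|x|_e\|_q$ and therefore couples the two factors) to cancel the doubling and recover contraction throughout the claimed range $p>q>1$ is the principal technical ingredient; after that, the remainder of the argument is a standard application of nonlinear Perron-Frobenius / Banach fixed-point machinery.
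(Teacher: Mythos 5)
Your overall architecture is exactly the paper's: identify the iteration as the normalized fixed-point map $H(x)=\big(\nabla f(x)/\|\nabla f(x)\|_{p^*}\big)^{1/(p-1)}$, show via the KKT/Euler argument that positive critical points of \eqref{eq:cp_opt} are precisely its fixed points, and conclude by a nonlinear Perron--Frobenius contraction plus Banach's fixed-point theorem (the paper uses the Thompson metric $\|\ln x-\ln y\|_\infty$ rather than the Hilbert projective metric, but that is a cosmetic difference here). The problem is that the one step you flag as ``the principal technical ingredient'' --- obtaining the contraction constant $|q-1|/|p-1|$ rather than $2(q-1)/(p-1)$ --- is the entire content of the theorem, and you do not supply it. The compositional calculus you propose (entrywise powers scale the metric by $|\alpha|$, positive linear maps are non-expansive by Birkhoff) genuinely cannot deliver the sharp constant: the factor $\Diag(x)^{q-1}$ and the inner factor $(B^\top x^q)^{1/q-1}$ each contribute $q-1$ when treated as separate Lipschitz links in the chain, and the triangle inequality gives $2(q-1)/(p-1)$, i.e.\ convergence only for $p>2q-1$. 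This is precisely the regime the paper says is already covered by the general result of Tudisco--Higham (2021) on nonlinear singular value problems; the whole point of Theorem~\ref{thm:main} is the improvement to $p>q$.

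The paper closes this gap not by patching the compositional bound but by a direct mean-value-theorem estimate on the logarithmic Jacobian $\Diag(H(z))^{-1}\D H(z)\Diag(z)$ of the full map. Two cancellations are exploited there. First, the $p^*$-normalization produces a term of the form $|C_{ij}-\sum_i\gamma_i C_{ij}|$ with $\gamma_i\ge 0$, $\sum_i\gamma_i=1$, which is absorbed by $\max_i C_{ij}$ rather than doubled. Second, and decisively, the explicit formula for $\partial F(z)_i/\partial z_j$ shows that the two $(q-1)$-contributions (from $z_i^{q-1}$ and from $(\sum_{k\in e}z_k^q)^{1/q-1}$) enter $\big(\D F(z)z\big)_i/F(z)_i$ with opposite signs, and after recombining, every summand is the corresponding term of $F(z)_i$ multiplied by a ratio of the form $\sum_{k\in e\setminus\{i\}}z_k^q/\sum_{k\in e}z_k^q$ or $z_j^q/\sum_{k\in e}z_k^q$, each lying in $[0,1]$; this yields $\big|\big(\D F(z)z\big)_i/F(z)_i\big|\le|q-1|$ and hence the claimed rate. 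Your intuition that the coupling $x_j\le\|x|_e\|_q$ for $j\in e$ must be used is pointing in the right direction, but as written it is a heuristic, not an argument: you would need to either reproduce a Jacobian computation of this kind or prove a bespoke two-variable Hilbert-metric lemma for maps of the form $u\mapsto u^{q-1}\odot B\Xi(B^\top u^q)^{1/q-1}$, and nothing in the standard Birkhoff toolbox does this for you. A secondary, more minor gap is the final identification of the unique positive fixed point with the global maximizer of \eqref{eq:cp_opt}: your proposed ascent inequality $f(H(x))\ge f(x)$ is plausible but is asserted, not proved, whereas the paper argues via uniqueness of the positive critical point of $x\mapsto f(x)/\|x\|_p$.
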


It is interesting to note that the vector computed in this way can be interpreted as %a eigenvector of a nonlinear hypergraph Laplacian operator, in the spirit of \cite{} and as 
a `nonlinear eigenvector  centrality' for the nodes in the hypergraph, as per the model introduced in \cite{arrigo2020framework,tudisco2021node}. In fact, the following direct corollary of the previous result shows that the core score $x^\star\succ 0$ solution to \eqref{eq:cp_opt} coincides with an eigenvector of a nonlinear hypergraph Laplacian operator of the form $L(x) = B\Xi g(B^\top f(x))$, for particular nonlinear choices of $f$ and $g$. 
% \textcolor{red}{Delete the rest of this paragraph, since it is similar to text on page 2?}
% Nonlinear graph and hypergraph Laplacian operators of this type appear in different settings, including $p$-Laplacian operators \cite{buhler2009spectral,elmoataz2008nonlocal,saito2018hypergraph},   chemical reaction networks \cite{rao2013graph,van2016network},  consensus dynamics and opinion formation  \cite{neuhauser2021consensus},  network oscillators  \cite{battiston2021physics,millan2020explosive,schaub2016graph} and  node classification  \cite{arya2021adaptive,ibrahim2019nonlinear,prokopchik2021nonlinear,tudisco2021nonlinear}.

\begin{corollary} \label{cor:spectral}
Let $x^\star\succ 0$ be the unique solution of \eqref{eq:cp_opt}. Then, $x^\star$ is the unique nonnegative eigenvector of the nonlinear eigenvector problem 
\begin{equation}\label{eq:eigen_form}
    B\Xi g(B^\top f(x))  = \lambda x
\end{equation}
with $g(x)=x^{-1 + 1/q}$ and  $f(x) = x^{q/(p-q)}$. 
\end{corollary}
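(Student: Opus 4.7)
My plan is to derive the first-order optimality conditions for \eqref{eq:cp_opt} at $x^\star$, reshape them into the operator equation \eqref{eq:eigen_form} via the explicit powers prescribed by $f$ and $g$, and then transfer uniqueness across the associated bijection of the positive orthant from Theorem~\ref{thm:main}. Because $x^\star \succ 0$, only the constraint $\|x\|_p = 1$ is active, so a single Lagrange multiplier $\lambda$ suffices, and stationarity of the Lagrangian reads $\nabla F(x^\star) = \lambda (x^\star)^{p-1}$, where $F(x) = \sum_{e\in E}\xi(e)\|x|_e\|_q$ is the objective from \eqref{eq:cp_opt}.

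Using $\|x|_e\|_q^q = (B^\top x^q)_e$ and the chain rule, the gradient factors as
\[
\nabla F(x) = \Diag(x)^{q-1}\, B\, \Xi\, g(B^\top x^q), \qquad g(t) = t^{-1+1/q},
\]
so that the nonlinearity $g$ from the statement appears naturally. Entrywise division by the strictly positive vector $(x^\star)^{q-1}$ isolates the identity $B\Xi g(B^\top (x^\star)^q) = \lambda (x^\star)^{p-q}$, and the algebraic manipulation $(x^\star)^q = \bigl((x^\star)^{p-q}\bigr)^{q/(p-q)} = f\bigl((x^\star)^{p-q}\bigr)$ rewrites it as the nonlinear eigenvalue equation \eqref{eq:eigen_form}, after the natural power reparametrization identifying $x^\star$ with $(x^\star)^{p-q}$.

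Uniqueness then follows because the map $x \mapsto x^{p-q}$ is a smooth bijection of the open positive orthant (since $p > q > 0$), so any strictly positive eigenpair of \eqref{eq:eigen_form} pulls back to a strictly positive stationary point of $F$ on the $p$-sphere, which Theorem~\ref{thm:main} identifies uniquely with $x^\star$. The only subtlety I anticipate is the converse direction: a priori one must ensure that a positive eigenvector of the nonlinear operator corresponds to the global maximizer of \eqref{eq:cp_opt} and not to some other positive stationary point. This is precisely what the Perron--Frobenius-type contraction behind the linear convergence in Theorem~\ref{thm:main} already delivers, so no independent argument is needed beyond quoting that theorem.
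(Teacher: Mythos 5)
Your proposal is correct and follows essentially the same route as the paper: both reduce to the stationarity identity $B\Xi(B^\top x^q)^{\frac 1q-1}=\lambda x^{p-q}$ (you obtain it from the Lagrange condition $\nabla f(x^\star)=\lambda (x^\star)^{p-1}$, the paper from the equivalent fixed-point form $x=(F(x)/\|F(x)\|_{p^*})^{1/(p-1)}$ of the iteration, which the appendix shows are the same thing) and then apply the change of variable $x\mapsto x^{p-q}$. Your explicit handling of the converse direction via the bijectivity of $x\mapsto x^{p-q}$ on the positive orthant and the uniqueness of the positive critical point from Theorem~\ref{thm:main} is exactly what the paper's ``if and only if'' implicitly relies on.
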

\begin{proof}
Let $F(x) =  \Diag(x)^{q-1}  B \Xi(B^\top x^q)^{\frac 1 q-1}$. Assume $x\succ 0$. From the fixed point identity $x= (F(x)/\|F(x)\|_{p^*})^{1/(p-1)}$ we have $F(x)^{1/(p-1)} = \|F(x)\|_{p^*}^{1/(p-1)}x$. Multiplying this identity entrywise by $x^{(1-q)/(p-1)}$ on the left and then taking the $(p-1)$-th of power both sides, we get
$$
B\Xi(B^\top x^q)^{\frac 1 q -1} = \|F(x)\|_{p^*}\,x^{p-q}\, .
$$
Finally, the change of variable $x \mapsto x^{1/(p-q)}$  shows that $x^\star \succ 0$ is the limit of the iterative scheme in Theorem \ref{thm:main} if and only if $x^\star$ is such that $B\Xi g(B^\top f(x^\star))  = \lambda x^\star$ with $g(x)=x^{1/q-1}$,  $f(x) = x^{q/(p-q)}$ and $\lambda = \|F((x^\star)^{1/(p-q)})\|_{p^*}>0$. 
\end{proof}

Before moving on, we briefly point out how Theorem \ref{thm:main} compares with the main theorem in \cite{tudisco2021node} in view of the corollary above. A direct consequence of  Corollary \ref{cor:spectral}  combined with \cite[Thm.~2.3]{tudisco2021node} shows  that the nonlinear power method proposed in \cite{tudisco2021node} for general nonlinear singular value problems, can be used to compute a solution to \eqref{eq:eigen_form}, provided $|p-q|\geq |q-1|$ and the bipartite graph representation of the underlying hypergraph is connected. Note that the condition on $p$ and $q$ in this case boils down to $p\geq 2q-1$, when $p,q>1$. Using a different argument, Theorem \ref{thm:main} shows that for the particular choice of $g$ and $f$ which correspond to the core-periphery optimization problem \eqref{eq:cp_opt}, the less stringent condition $p>q$ is enough to ensure convergence of the proposed fixed point iteration to the solution of \eqref{eq:eigen_form}, without any requirement on the topology of the hypergraph. 

\section{Comparison with the graph setting}
\label{sec:graph}

Following the seminal work by Borgatti and Everett \cite{borgatti2000models}, over the years several models for core-periphery detection on graphs have been developed, including methods based on degree and eigenvector centralities \cite{mondragon2016network,rombach2017core}, rank-1 approximations \cite{minres} and the optimization of a core quality fictional  \cite{rombach2017core}. As observed in \cite{tudisco2019core}, several of these methods can be cast as the optimization of a core-periphery kernel function similar to \eqref{eq:cp_opt} and a competitive core-periphery detection method available for graphs is obtained there by means of a nonlinear spectral method, which corresponds to the graph version of the method we propose here.  All these methods can be directly applied to hypergraphs after  a `flattening' or `projection' step, where the whole higher-order graph is
approximated by a standard graph. A widely-used projection approach is the so-called (linear) `clique-expansion'   \cite{agarwal2006higher,carletti2020random,HdK21,
rodri2002laplacian,rodriguez2003laplacian,rodriguez2009laplacian,zhou2007hypergraph}, where hyperedges in $H$ are replaced by cliques in the flattened graph $G_H$, whose adjacency matrix $A_H$ therefore becomes 
\begin{equation}\label{eq:clique-expansion-adjacency}
    (A_H)_{ij} = \sum_{e: \, i,j\in e}w(e) 
\end{equation}
 with $w(e)$ the weights of the original hypergraph. While this is perhaps the most popular projection method, other approaches are possible, including clique averaging \cite{agarwal2005beyond}, where  the weights $w(e)$ in the sum \eqref{eq:clique-expansion-adjacency}  are averaged with generalized mean functions,  connectivity graph expansion \cite{banerjee2021spectrum,de2021phase}, where the weights in the clique expansion are based on hyperedge degrees, for example replacing $w(e)$ with $1/(|e|-1)$ in \eqref{eq:clique-expansion-adjacency}, and the star expansion \cite{zien1999multilevel}, where the flattened graph is obtained by introducing new vertices for each hyperedge, which are then connected according to the hypergraph structure.

However, graph core-periphery detection methods applied to a projected hypergraph may lead to poor core-periphery assignments on the original hypergraph. For example,  when applied to the clique-expanded graph, the nonlinear spectral method proposed in \cite{tudisco2019core} computes the global optimizer of 
$$
\sum_{ij}(A_H)_{ij}(|x_i|^q + |x_j|^q)^{1/q} = \sum_{e\in E}\Big\{ \sum_{i,j \in e} w(e) (|x_i|^q + |x_j|^q)^{1/q}\Big\}\, .
$$
While this objective function reduces to the proposed hypergraph optimization problem \eqref{eq:cp_opt} when all hyperedges contain exactly two nodes, the two optimization problems are significantly different in the general case. 
In particular, the hypergraph flattening loses track of the original hyperedges and thus only measures core and periphery structure in a pairwise fashion. Hence, approaches based on hypergraph flattening may fail to assign a correct score to the nodes even in simple hypergraph examples, as shown in the next section. 

\section{Experiments}
\label{sec:exp}
In this section we perform experiments on a range of hypergraphs in order to validate the performance of the proposed nonlinear eigenvector method for the optimization of \eqref{eq:cp_opt}, which we will denote as \HyperNSM. In all our experiments we use the scaling function $\xi(e) = 1/|e|$ and we choose $q=10$ and $p=11$ in \eqref{eq:cp_opt}. For a hyperedge weight $w(e)>0$, we set $\xi(e) = w(e)/|e|$. 

We compare our method with a number of existing approaches. One is the Union of Minimal Hitting Sets (\UMHS{}) method by Amburg et al.\ \cite{amburg2021planted}, which is a greedy method designed to recover planted hitting sets in hypergraphs. The method is inherently local and thus requires several random initializations out of which the best assignment is selected. In our experiments we perform five random restarts due to time limitations. 
Two additional baselines are core-periphery detection algorithms for graphs, applied to the clique-expanded graph in (\ref{eq:clique-expansion-adjacency}): 
\begin{enumerate}
    \item \BorgattiEverett: The method by Borgatti and Everett \cite{borgatti2000models}, which is the pioneering approach for core periphery detection in graphs
    \item \GraphNSM: The nonlinear spectral method by Tudisco and Higham \cite{tudisco2019core}, which was shown to be a highly competitive core-periphery detection method for graphs.
\end{enumerate}
 
 \subsection{Hypergraph core-periphery profile}\label{subsec:profile}
In order to evaluate the quality of the core-periphery assignment we introduce a generalization of the core-periphery profile for graphs \cite{della2013profiling,tudisco2019fast} which we define as follows.
For any subset of nodes $S\subseteq V$ consider the quantity
$$
 \gamma(S) = \frac{\text{\# edges all contained in }S}{\text{\# edges with at least one node in }S}, 
$$
or its weighted version:
$$
\gamma(S)  =  \Big(\sum_{e: \,  e\subseteq S}\xi(e)\Big) \Big(\sum_{e: S\cap e \neq \emptyset}\xi(e)\Big)^{-1}.
$$
Given a core-periphery score vector $x\succeq 0$ with distinct entries  
the \emph{hypergraph core-periphery profile} is the function $\gamma(k)$ that to any $k\in \{1,\dots, n\}$ associates the value $\gamma(S_k(x))$ where $S_k(x)$ is the set of $k$ nodes with smallest core-periphery score in $x$. Given its definition, $\gamma(S)$ is small if $S$ is largely contained in the periphery of the hypergraph.  Thus a hypergraph has a strong core–periphery
structure revealed by a core–score vector $x$ if the corresponding profile $\gamma(k)$ attains
small values as $k$ increases from one, and then grows drastically as $k$ crosses some
threshold value $k_0$, which indicates that the nodes in $V\setminus S_{k_0}(x)$ form the core.

\subsection{Hyperplane and hypercycle results}\label{subs:examples}
Consider the hypergraph in the left and center panels of Figure~\ref{fig:hyperplane}. 
Because of its airplane outline, we will refer to this hypergraph
as a  `hyperplane'. 
Intuitively the clique-expanded approach will not work here as the large `wing' hyperedge gives rise to a fully connected subgraph in $G_H$ which would correspond to a core set in the projected graph, which is not present in the original hypergraph.
This is confirmed by the central plot in Figure \ref{fig:hyperplane}, where the nodes in the hypergraph are shaded according to the core score computed with \GraphNSM{} on the clique-expanded graph. 
By contrast, 
\HyperNSM\ gives a more satisfactory result where 
each hyperedge has exactly one of the top-two core nodes, and the non-overlapping wing nodes are regarded as peripheral
(left panel in Figure \ref{fig:hyperplane}).
 This difference is also reflected in the rightmost panel of 
 Figure~\ref{fig:hyperplane} where we compute the hypergraph core-periphery profiles introduced in \S\ref{subsec:profile}. The score assigned by \HyperNSM{} identifies a two-node core, while no hypergraph core-periphery structure seems to be captured by the 
 graph method applied to the clique-expansion.

\begin{figure}[t!]
    \centering
    \begin{subfigure}[t]{1\textwidth}
    \centering
    \includegraphics[width=\textwidth]{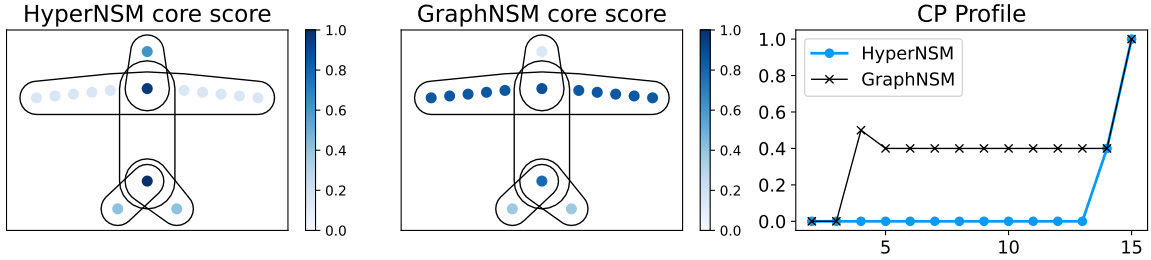}
    \caption{Hyperplane with five hyperedges and 15 nodes.}
    \label{fig:hyperplane}
    \end{subfigure}
    
    \vspace{1em}
    
    \begin{subfigure}[t]{1\textwidth}
    \centering
    \includegraphics[width=\textwidth]{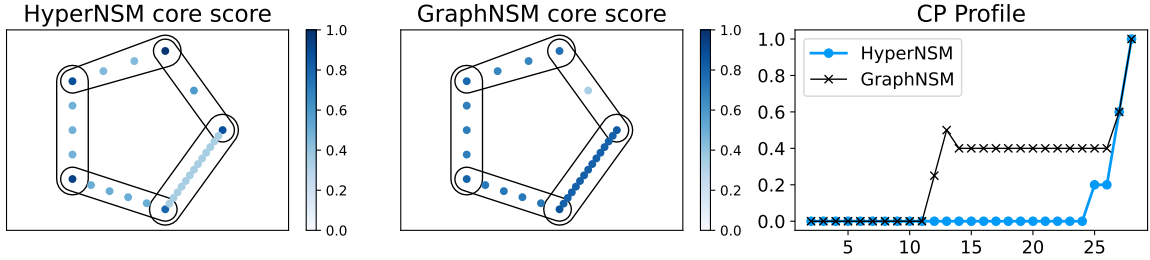}
    \caption{Hypercycle with five hyperedges and 28 nodes.}
    \label{fig:hypercycle}
    \end{subfigure}
   
    \caption{Left and central panels: hypergraph drawing with nodes colored according to the core score obtained by means of the proposed \HyperNSM{} approach (left) and the purely graph method \GraphNSM{} (center). Right panel: hypergraph core-periphery profile corresponding to the two core score assignments, i.e.\ $\gamma_x(k)$ as defined in \S\ref{subsec:profile}, plotted as a function of $k$, for the two core score vectors $x$.}
\end{figure}

We observe similar behaviour on the `hypercycle' hypergraph
shown in Figure~\ref{fig:hypercycle}.
Here, each hyperedge shares exactly two nodes with exactly two other hyperedges, in a periodic fashion,
and we have five hyperedges of size 3, 4, 5, 6 and 15. 
 As one hyperedge is much larger than the others, the clique expansion approach assigns all of its nodes to the core. 
 HyperNSM{}\ instead assigns high coreness to each of the 5 `overlap' nodes. 
In the right panel we see that the hypergraph core-periphery profile starts to increase when the first two
 overlap nodes are included, since every hyperedge contains exactly two of these nodes.

% \ftnote{Add cp-recovery score for emails using \cite{amburg2021planted}. Use this example to compare with \cite{amburg2021planted}}

% \begin{figure}[t!]
%     \centering
%     \includegraphics[width=\textwidth]{hypercircle.pdf}
%     \caption{`Hypercycle' hypergraph with five edges.}
%     \label{fig:hypercircle}
% \end{figure}

% \begin{figure}[t!]
% \includegraphics[width=\textwidth]{hyperplane.png}\\[1em]
% \includegraphics[width=\textwidth]{hypercircle.png}
% \caption{Two example hypergraphs: the hyperplane and the hypercircle}\label{fig:small_examples}
% \end{figure}

\subsection{Real-world datasets with planted core} \label{sec:experiments_planted_core}
We consider here two real-world hypergraph datasets---\textit{W3C} and \textit{Enron}---with a planted core set that arises directly 
from the data collection process, as discussed in \cite{amburg2021planted}. Both the datasets are  email hypergraphs, in which nodes are email addresses. Each hyperedge records a set of email addresses that appear on the same email.  
Table~\ref{tab:email_datasets_statistics} reports a summary of statistics of the two datasets.

% \begin{description}
% \item[W3C]  \begin{itemize}[noitemsep,topsep=0pt]
% \item number of nodes: 15458; number of hyperedges: 19821; number of nodes in the core: 1509;
% \item maximum edge weight: 1.0; mean (variance) edge weights: 1.0 (0.0);
% \item maximum edge size: 25; mean (variance) edge size: 2.22 (0.96).
% \end{itemize}

% \item[Enron] \begin{itemize}[noitemsep,topsep=0pt]
% \item number of nodes: 12722; number of hyperedges: 5734; Number of nodes in the core: 132;
% \item maximum edge weight: 419.0; mean (variance) edge weights: 2.73 (82.19);
% \item maximum edge size: 25; mean (variance) edge size: 5.25 (26.0).
% \end{itemize} 
% \end{description}

\begin{table}[t]
    \centering
\begin{tabular}{lccccccccl}
\toprule
 & \#nodes & \#edges & \#nodes in core & \multicolumn{3}{c}{$w(e)$} & \multicolumn{3}{c}{$|e|$}\\
\cmidrule(lr){5-7}\cmidrule(lr){8-10}
 &          &        &   & max  & mean & std    & max & mean & std \\
\midrule
\textit{Enron} $\quad$ & 12722 & 5734 & 132 & 419 & 2.73 & 9.07 & 25 & 5.25 & 5.1 \\
\textit{W3C} & 15458 & 19821 & 1509 & 1 & 1 & 0 & 25 & 2.22 & 0.98 \\
\bottomrule
\end{tabular}
\caption{Basic statistics for the two email datasets with planted hitting set}\label{tab:email_datasets_statistics}
\end{table}

These datasets are characterized by a planted core, which consists of a `hitting set' of nodes such that every hyperedge contains 
at least one of these nodes. (In practice, the data was collected by 
examining the email accounts of the hitting set.)
 The greedy algorithm \UMHS{}, based on the union of minimal hitting sets, was designed in \cite{amburg2021planted} with the aim of recovering a planted core of this type. Even though this definition of core differs from the one we consider in this paper, the experiments illustrated in Figures \ref{fig:core_profile_email} and \ref{fig:core_intersection_email} show that the performance of  \HyperNSM{} is on par with that of \UMHS{} (with five random restarts). 

\begin{figure}[t!]
    \centering
    \includegraphics[width=\textwidth]{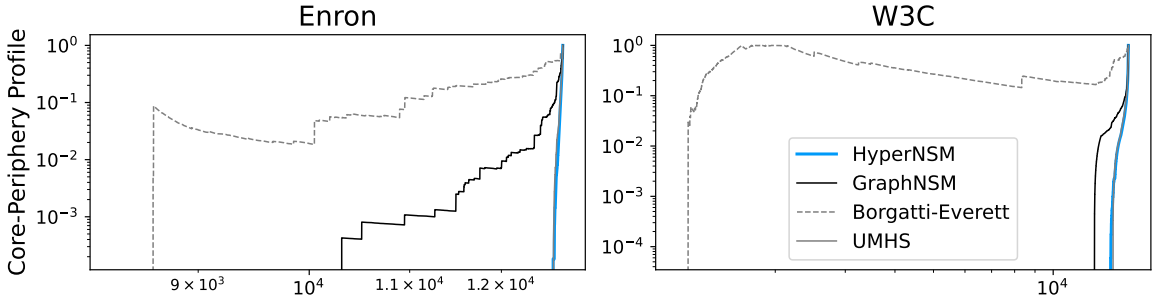}
    \caption{Core-periphery profile for different core-periphery detection methods. The core-periphery profiles for \HyperNSM{}\ and \UMHS{}\ are visually indistinguishable in both cases.}
    \label{fig:core_profile_email}
\end{figure}
\begin{figure}[t!]
    \centering
    \includegraphics[width=\textwidth]{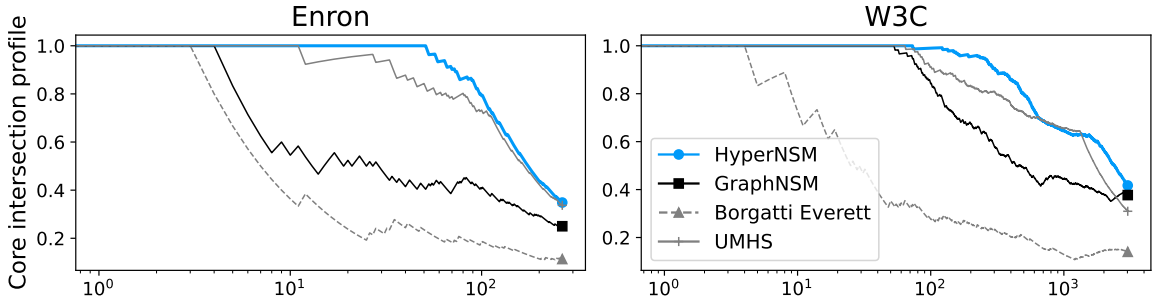}
    \caption{Intersection between the top $k$ nodes as ranked by different core-periphery methods and the planted core set, for increasing values of $k$ ($x$-axis).}
    \label{fig:core_intersection_email}
\end{figure}

Precisely, Figure \ref{fig:core_profile_email}  shows the core-periphery profiles corresponding to the core score computed with the four methods \HyperNSM{}, \GraphNSM{}, \BorgattiEverett{} and \UMHS{}. As \UMHS{} is originally designed to output a ordered list of nodes such that the nodes at the top of the list belong to the core, we treat this list as a ranking and use the corresponding ordering as a discrete core-score vector for \UMHS{}. 
The core-periphery profiles for \HyperNSM{}\ and \UMHS{}\ are visually indistinguishable in both cases.

Figure \ref{fig:core_intersection_email}, instead, shows the `core intersection profile', which is computed as follows. 
Given the planted core set $C$, for any set $S$ let 
$$
\iota(S) = |S\cap C|/|S|\, .
$$
Similarly to the core-periphery profile, the core intersection profile corresponding to a vector $x\succeq 0$ is the function $\iota_x(k)$ which to any $k$ associates the value $\iota(S_k(x))$ where $S_k(x)$ is the set of top $k$ nodes according to $x$. Thus, the core-score vector $x$ well-captures the planted core $C$ if its corresponding core intersection profile remains $\approx 1$ as $k$ is increased up to the size of the planted core. 

While having comparable performance 
to \HyperNSM{} 
on these test datasets, \UMHS{} is significantly more expensive than the other algorithms, as shown by the computational times in Table~\ref{tab:execution_time}. For \UMHS{} we report the execution time for one run as implemented in \cite{amburg2021planted}, where the method finds a hitting set processing the hyperedges in a random order, then prunes the set to reduce it to a minimal hitting set. For the other methods, we show the time elapsed for the relative norm of the difference of two consecutive iterates to reach the tolerance $\texttt{1e-8}$. In terms of computational complexity, the cost of  each iteration of \HyperNSM{} is dominated by the matrix vector products $Bv$ and $B^\top v$, which are linear in the number of input data, i.e., the cost of each step is $O(\sum_e |e|)$. Thus, for sparse hypergraphs, the method is fast.  For \GraphNSM{}, the cost of each iteration is dominated by the matrix vector product $A_Hv$. As the number of nonzero entries in $A_H$ is $O(\sum_e |e|)$, \GraphNSM{} has comparable computational complexity to \HyperNSM{}. However, \GraphNSM{} also requires $A_H$ to be formed from $H$, which may be expensive and memory demanding when very large hyperedges are present.  The cost per iteration of \BorgattiEverett{} is also linear in the number of nonzeros of $A_H$ but, as this is a purely-linear iteration (no entriwise powers are required) it is in practice faster than the nonlinear spectral iterative counterparts. 

\begin{table}[t]
    \centering
    \begin{tabular}{lccccc}
    \toprule
    & & \HyperNSM{} & \GraphNSM{} & \BorgattiEverett{} & \UMHS{}   \\
    \midrule
    \textit{Enron} & & 1.61 & 4.99 & 0.33 & 29.63\\
    \textit{W3C} & & 3.32 & 3.74 & 0.04 & 1269.17\\
    \bottomrule
    \end{tabular}
    \caption{Execution time (sec) for different methods on the two email datasets from \S\ref{sec:experiments_planted_core}. The table shows time for one run of \UMHS{}. The other methods are run until the relative norm of the difference of two consecutive iterates is smaller than $\texttt{1e-8}$. }
    \label{tab:execution_time}
\end{table}

We also note that \UMHS{}\ is based on the strict definition that
the core nodes form a 
hitting set; here every hyperedge must contain at least one core node.
For this reason, the algorithm is not well-suited to more general 
data sets where there is not a perfect planted core.
\HyperNSM{} is designed to  tolerate spurious or noisy 
information---note that the model (\ref{eq:mod}) may admit hyperedges
involving only peripheral nodes, albeit with low probability.

As a final experiment, we show in Figure \ref{fig:spyplots_email} colored sparsity plots for the clique-expanded graph of the two email hypergraph datasets. Each sparsity plot shows the nonzero entries of the weighted adjacency matrix  of the clique-expanded graph, as defined in \eqref{eq:clique-expansion-adjacency}.  Nonzero entries are shaded according to their relative value; darker $(A_H)_{ij}$ correspond to larger edge weights. Each column in the figure shows the colored sparsity pattern obtained by permuting the entries of the matrix according to the core-periphery score vector obtained with one of the methods considered. This figure further highlights how the core-periphery detection problem in hypergraphs fundamentally differs from the same problem on the projected graph. Although 
\GraphNSM{} fails to recover the planted hypergraph cores, as we saw in 
Figure~\ref{fig:core_intersection_email},
it finds more compact core structures than \HyperNSM{} on the
clique expansion graph (\ref{eq:clique-expansion-adjacency}). %In fact, while \GraphNSM{} fails to recover the planted hypergraph core (c.f.\ Fig.\ \ref{fig:core_intersection_email}), it perfectly reveals a graph core-periphery structure in $A_H$, while \HyperNSM{} behaves the opposite way. \textcolor{red}{Should $A$ here be $A_H$?  Would it be better to change the final sentence to: Although  \GraphNSM{} fails to recover the planted hypergraph cores, as we saw in  Figure~\ref{fig:core_intersection_email}, it finds more compact core structures than \HyperNSM{} on the clique expansion graph (\ref{eq:clique-expansion-adjacency}). }

\begin{figure}[t!]
    \centering
    \includegraphics[width=\textwidth,clip,trim=2.6cm 8.8cm 2.4cm 7.8cm]{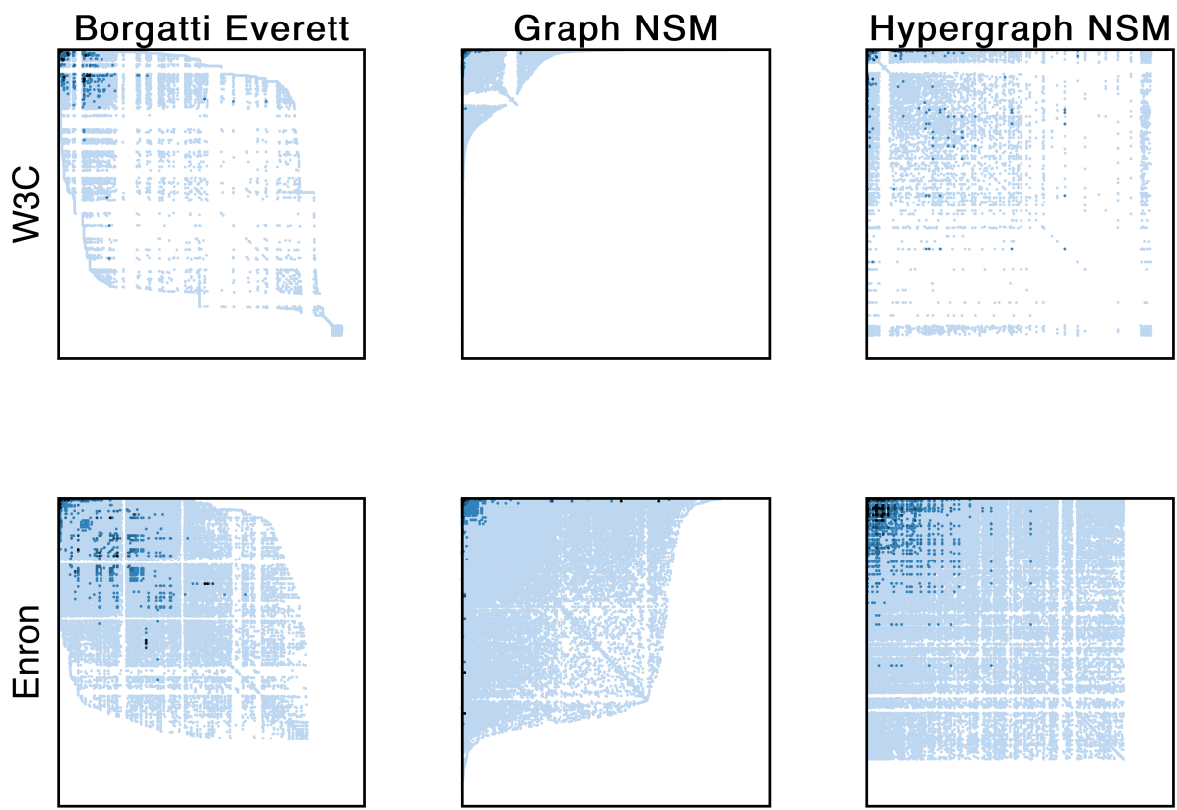}
    \caption{Colored sparsity plots of the clique-expanded graph's adjacency matrix. Nonzero elements in the matrix are colored according to their value (the larger the darker) and are permuted according to the entries of the core-periphery score vector of three different methods.}
    \label{fig:spyplots_email}
\end{figure}

% \subsection{Syntethic examples}\label{sec:synthetic}
% Figure \ref{fig:small_examples} shows the core periphery assignment of the hypergraph and the clique expanded graph nonlinear spectral methods on small example hypergraphs. The color of the nodes reflect their core score: the higher is the `blue' intensity, the higher is the core value. 

\subsection{Real-world datasets with no available core information}

Next, we test
\HyperNSM{},
\UMHS{}\ and
\GraphNSM\ 
on a set of hypergraphs where the presence of a core set is not known 
a priori. Basic details about the datasets used in this section are summarized in Table~\ref{tab:datasets_statistics}.

\begin{table}[t]
\centering
\begin{tabular}{lcccccccc}
\toprule
 & \# nodes & \# edges & \multicolumn{3}{c}{$w(e)$} & \multicolumn{3}{c}{$|e|$}\\
\cmidrule(lr){4-6}\cmidrule(lr){7-9}
 &          &          & max  & mean & std    & max & mean & std \\
\midrule
\textit{Cora} & 2708 & 1579 & 1 & 1 & 0 & 6 & 4.03 & 1.02 \\
\textit{Citeseer} & 3306 & 1079 & 1 & 1 & 0 & 27 & 4.2 & 2.02 \\
\textit{Pubmed} & 19717 & 7963 & 1 & 1 & 0 & 172 & 5.35 & 5.67 \\
\inrule
\textit{NDC classes} & 1161 & 1090 & 2083 & 45.62 & 150.11 & 39 & 5.97 & 4.99 \\
\textit{NDC substances} & 5556 & 10273 & 2419 & 10.99 & 59.13 & 187 & 6.62 & 9.3 \\
\textit{Tags ask-ubuntu} & 3029 & 147222 & 1373 & 1.84 & 11.76 & 5 & 3.39 & 1.02 \\
\textit{Email EU} & 1005 & 25148 & 4875 & 9.36 & 50.35 & 40 & 3.56 & 3.4 \\
\bottomrule
\end{tabular}
\caption{Basic statistics for the real-world hypergraph datasets with no planted core. }\label{tab:datasets_statistics}
\end{table}

The first collection of datasets are co-citation hypergraphs \textit{Cora}, \textit{Citeseer} and \textit{Pubmed}  \cite{giles1998citeseer,mccallum2000automating,sen2008collective}. All nodes in the datasets are documents and hyperedges are based on co-citation (all papers cited in one manuscript form a hyperedge). These hypergraphs are unweighted. The second 
collection of hypergraphs is built starting from a  timestamped sequence of simplices,  as in  \cite{Benson-2018-simplicial}. Given a temporal sequence of simplices where each simplex is a set of nodes, we represent the dataset as a hypergraph with one hyperedge for each simplex (ignoring the time stamp), weighted with an integer counting how many times that hyperedge appears in the data. Nodes in \textit{Email EU} are email addresses at a European research institution and each hyperedge is a set of emails sent to one or multiple recipients. Nodes in the \textit{NDC substances} dataset are drugs, and hyperedges are formed by all the drugs corresponding to a  National Drug Code by the U.S.A.\ Food and Drug Administration, while the \textit{NDC classes} dataset is made out of one hyperedge  per drug and the nodes are class labels applied to the drugs. Finally, nodes in \textit{Tags ask-ubuntu} are tags and hyperedges are the sets of tags applied to questions on \texttt{https://askubuntu.com/}.

\begin{figure}[t]
    \centering
        \includegraphics[width=\textwidth,clip,trim=0 0 1.25em 0]{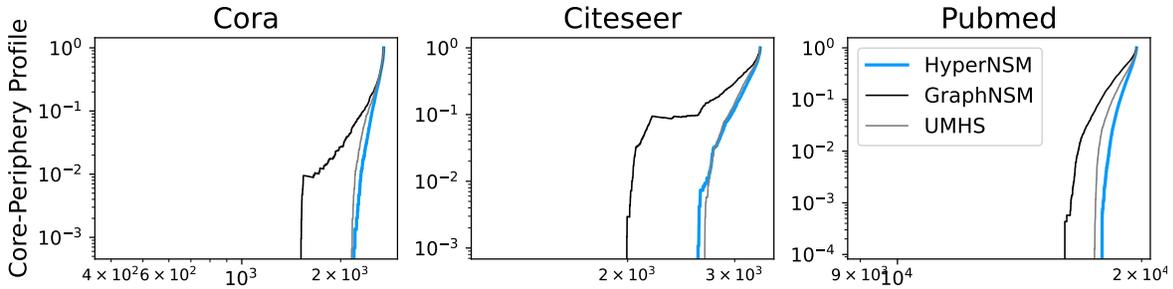}
    \caption{Core-periphery profiles of citation hypergraphs
    for three different core-periphery detection methods.}
    \label{fig:cp_profiles_citation}
\end{figure}

From the core-periphery profiles in Figure~\ref{fig:cp_profiles_citation} we see that for this citation data
both 
\HyperNSM{}\ and 
\UMHS{}\ find a convincing core structure.
The clique expansion based method, 
\GraphNSM, 
also gives a sharp core-periphery transition, albeit with a much 
larger core.
For the timestamped simplex data, however, 
as shown in 
Figure~\ref{fig:cp_profile_real_data} 
both the clique expansion and hitting set approaches fail to 
reveal core-periphery structure, whereas
\HyperNSM{}\ gives a sharp transition in each case.

% as per the 1972's National Drug Code The \textit{NDC substances} dataset is formed by letting each hyperedge contain all substances making up a drug corresponding to an NDC code. For the `NDC classes' dataset, instead, each simplex corresponds to a drug and the nodes are class labels applied to the drugs. 
% \begin{itemize}
%     \item \textbf{Email EU.}
% This is a  sequence of timestamped simplices where  nodes are email addresses at a European research institution and each simplex is a set of emails sent to one or multiple recipients.  
% \item \textbf{NDC substances and NDC classes.}
% Under the Drug Listing Act of 1972, the U.S. Food and Drug Administration releases information on all commercial drugs going through the regulation of the agency, forming the National Drug Code (NDC) Directory. The `NDC substances' dataset is formed by letting each simplex contain all substances making up a drug corresponding to an NDC code. For the `NDC classes' dataset, instead, each simplex corresponds to a drug and the nodes are class labels applied to the drugs.  
% \item \textbf{Tags ask ubuntu.}
% In this dataset, nodes are tags and simplices are the sets of tags applied to questions on \texttt{https://askubuntu.com/}. We represent the dataset as a hypergraph with one hyperedge for each simplex, weighted with an integer number counting how many times that hyperedge appears in the data.  
% \end{itemize}

\begin{figure}[t]
    \centering
    \includegraphics[width=\textwidth]{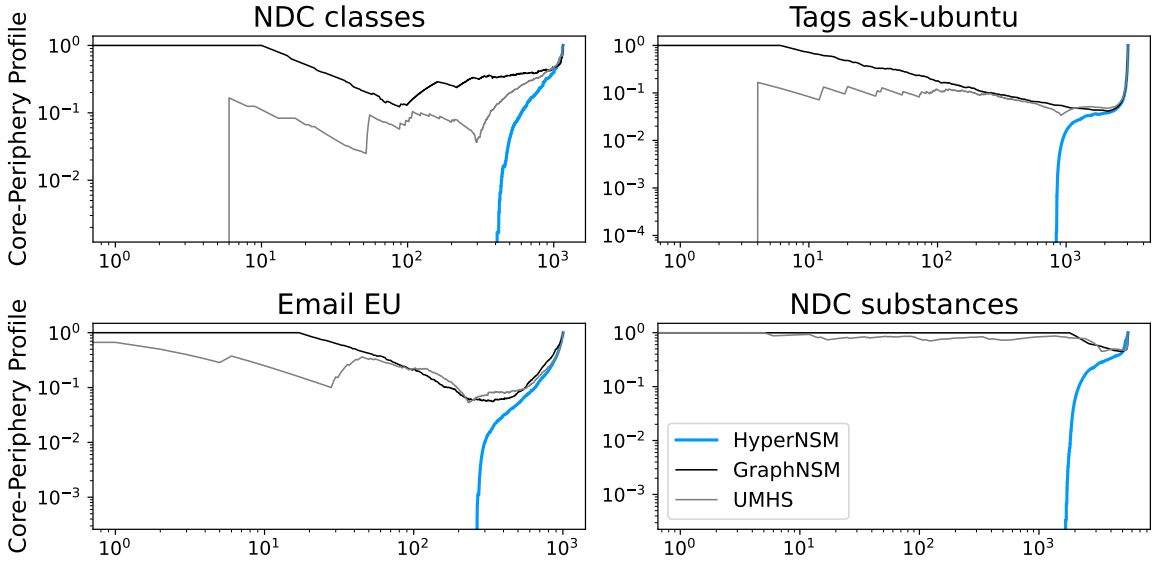}
    \caption{Core-periphery profile of four hypergraphs obtained from timestamped simplex data, for three different core-periphery detection methods.}
    \label{fig:cp_profile_real_data}
\end{figure}

\inew{}

\section{Conclusion}\label{sec:conc}
Our aim in this work was to extend existing graph-based 
core-periphery concepts and algorithms 
to the hypergraph setting, in order to account for  
the group-level interactions captured by many modern data
collection processes.
Our key take-home messages are that
\begin{itemize}[leftmargin=1em]
    \item the widely used approach of solving a suitable optimization problem can be generalized to the hypergraph case, and 
    a globally convergent iteration scheme is available,
    \item in this core-periphery setting, the general-purpose 
    recipe  
    ``flatten the hypergraph into a weighted clique expansion graph and apply a graph algorithm'' does not reduce the 
    computational complexity and does not lead
    to satisfactory results, 
    \item an extension of the graph core-periphery profile
    can be used to judge performance, and the new nonlinear spectral 
    method revealed core-periphery structure in a range of real datasets.
\end{itemize}

\bigskip
\section*{Data Statement}
The code used for the computations described here can be found at 
\href{https://github.com/ftudisco/core-periphery-hypergraphs}{https://github.com/ftudisco/core-periphery-hypergraphs} 
and the real datasets were obtained via the references cited in the text.

\appendix

\def\tilde{\widetilde}
\section{Proof of the main result}\label{app:proofs}

We devote this section to the proof of Theorem \ref{thm:main}. 
% \begin{theorem}\label{thm:main}
% If $p>q$ then \eqref{eq:cp_opt} has a unique solution $x^\star$. Moreover, $x^\star$ is entrywise positive and
% for any entrywise positive starting vector 
% the iterative scheme
% \begin{itemize}
%     \item $y \gets \Diag(x)^{q-1}  B \Xi(B^\top x^q)^{\frac 1 q-1}$
%     \item $x \gets (y/\|y\|_p)^{\frac 1 {p-1}}$
% \end{itemize}
% where $\Xi$ is the  diagonal $|E|\times |E|$ matrix with diagonal values $\xi(e)$, 
% converges to $x^\star$  with the linear rate of convergence $O(q/p)$. 
% \end{theorem}
% \begin{proof}
%
Let $f$ be defined as in \eqref{eq:cp_opt},  
%$$
%f(x) = \sum_{e \in E} \, \xi(e) \, \|x|_e\|_{q} \, ,
%$$
with $\xi(e)$ a positive scaling function of the hyeperedges. 
As $f$ is positively one-homogeneous, i.e.,\ we have $f(\lambda x) = \lambda f(x)$ for all $\lambda>0$, the constrained optimization problem in \eqref{eq:cp_opt} coincides with the unconstrained optimization of $g$ defined as $g(x) = f(x/\|x\|_p) = f(x)/\|x\|_p$. Thus, $x^\star$ is a solution to \eqref{eq:cp_opt} if and only if $\nabla g(x^\star)=0$. 

A direct computation of the gradient of $g$ implies that  $\nabla g(x) = 0$ if and only if $\nabla f(x) = g(x) \nabla\{\|x\|_p\}$. Suppose $x\succeq 0$. We have  $\nabla\{\|x\|_p\} = (\|x\|_p)^{1-p} x^{p-1}$  and 
$\nabla f(x) = F(x)$ with 
$$
F(x) = \Diag(x)^{q-1}  B \Xi(B^\top x^q)^{\frac 1 q-1}\, .
$$
Thus,   $\nabla g(x)= 0$ with $x\succeq 0$ if and only if $F(x)=\lambda x^{p-1}$, for some $\lambda >0$.

For a $p>1$, let $p^*=p/(p-1)$ denote its H\"older conjugate such that $1/p+1/q=1$. Define 
$$
H(x) = \frac{F(x)^{p^*-1}}{\|F(x)^{p^*-1}\|_p} = \frac{F(x)^{\frac{1}{p-1}}}{\|F(x)^{\frac{1}{p-1}}\|_p }=\Big(\frac{F(x)}{\|F(x)\|_{p^*}}\Big)^{\frac{1}{p-1}}\, . %\big\{G(x)/\|G(x)\|_p\big\}^\frac{1}{p-1}\quad \text{and}\quad G(x) = \Diag(x)^{q-1}  B \Xi(B^\top x^q)^{\frac 1 q-1}\, ,
$$
From $F(x)=\lambda x^{p-1}$, we have that $\nabla g(x)=0$ for $x\succeq 0$ if and only if $H(x) = x/\|x\|_p$ and,  for a point   such that $\|x\|_p=1$, if and only if $H(x) = x$. 
We will show next that $H$ has a unique fixed point such that $x^\star \succ 0$ and $\|x^\star \|_p=1$ and that the sequence $x^{(k+1)}=H(x^{(k)})$ converges to $x^\star$ for any $x^{(0)}$ with positive entries. This implies that $f(x)$ has only one critical point $x^\star$ with positive entries and of unit $p$-norm. Note that the sequence $x^{(k)}$ coincides with the iterations generated by scheme in the statement of Theorem \ref{thm:main}, thus showing the convergence of $x^{(k)}$ and its convergence rate will conclude the proof. 

% First, we show that any nonnegative fixed point of $H$ must be positive. 
% To this end, notice that the 
% It is not difficult to observe that $x^\star$ must be a global maximizer of $f$, i.e. a solution to \eqref{eq:cp_opt}. In fact, let $y$ such that $f(y)>f(x^\star)$. Since $f(y) = f(|y|)$ we can assume $y\succeq 0$. So, suppose by contradiction that $y_i = 0$ for some $i$. 

% of the optimization problem  

To this end, we will show that $H$ is a contraction with respect to the Thompson distance, defined as  
$$
d(x,y) = \|\ln x - \ln y\|_\infty\, ,
$$
for any two $x,y\succ 0$. Precisely, we will show that for any two points $x,y \in S_+ =\{x:x\succ 0, \|x\|_p=1\}$ we have $H(x),H(y) \in S_+$ and 
$$
d(H(x),H(y)) \leq \left| \frac{q-1}{p-1}\right| \, d(x,y) \, .
$$
Since $S_+$ is complete with respect to $d$ (see e.g.\ \cite{lemmens_nussbaum}), the Banach fixed point theorem will then directly imply the thesis.

% Note that, when $x\succ 0$, $H$ has the form 
% $$
% H(x) = \frac{G(x)^{p^*-1}}{\|G(x)^{p^*-1}\|_p } = \frac{G(x)^{\frac{1}{p-1}}}{\|G(x)^{\frac{1}{p-1}}\|_p }
% $$
% thus, the sequence generated by the iterative scheme in the statement coincides with $x^{(k)}$. 

In order to prove the contraction bound above we need a number of careful computations. First, using the mean value theorem we have 
\begin{align}
\begin{aligned}\label{eq:mean_value_thm}
    d(H(x),H(y)) &= \|\ln(H(x))-\ln(H(y))\|_\infty \\
    &= \|\ln(H(\exp(\ln(x))))-\ln(H(\exp(\ln(y))))\|_\infty\\
    &\leq \sup_{z \in \Omega(x,y)}\|\mathcal F(z) (\ln(x)-\ln(y))\|_\infty 
    \leq d(x,y) \sup_{z \in \Omega(x,y)}\|\mathcal F(z)\|_\infty,
\end{aligned}
\end{align}
where $\Omega(x,y)$ is the line segment joining $x$ and $y$, and $\mathcal F(z)$ denotes the Frech\'et derivative of the map $\ln \circ H \circ \exp$ evaluated at $z$. A direct computation using the chain rule shows that 
$$
\mathcal F(z) = \Diag(H(e^z))^{-1}\D H(e^z)\Diag(e^z) = \Diag(H(\tilde z))^{-1}\D H(\tilde z)\Diag(\tilde z),
$$
where $\tilde z\in \tilde \Omega:=\exp(\Omega(x,y))$ and where $\D$ denotes the Jacobian operator. 

% Let $\tilde F(x) = F(x)^{\frac{1}{p-1}}$. 
Using the chain rule several times we compute
$$
\D H(z) = \frac 1 {p-1} \left(\frac{F(z)}{\|F(z)\|_{p^*}}\right)^{\frac 1 p -1} \left\{\frac{\D  F(z)}{\| F(z)\|_p}-\frac{ F(z)(\nabla \| F(z)\|_p)^T \D  F(z)}{\| F(z)\|_p^2}\right\} 
$$
which implies that
\begin{align*}
\begin{aligned}
|\Diag(H(z))^{-1} \D H(z)| &= |\Diag( F(z))^{\frac 1 {1-p}} \| F(z)\|_{p^*}^{\frac 1 p-1} \D H(z)|\\
&=\frac 1 {|p-1|}\left|\Diag( F(z))^{-1} \D  F(z) - \frac{\mathbb 1 \, (\nabla \| F(z)\|_{p^*})^\top \D F(z)}{\|F(z)\|_{p^*} }\right|\, .
\end{aligned}
\end{align*}
For $z\succ 0$, we have $\nabla \| F(z)\|_{p^*} = \| F(z)\|^{1-p^*}_{p^*} F(z)^{p^*-1}$. Thus
\begin{align*}
\begin{aligned}
|p-1|\Big(|\Diag(H(z))^{-1} \D H(z)|\Big)_{ij} &= \left| \frac{\D  F(z)_{ij} }{ F(z)_i} - \sum_i \frac{\| F(z)\|_{p^*}^{1-p^*} F(z)_i^{p^*-1}\D  F(z)_{ij} } {\big(\sum_k  F(z)_k^{p^*} \big)^{1/p^*} }\right| \\
&= \left| \frac{\D  F(z)_{ij} }{ F(z)_i} - \sum_i \frac{\| F(z)\|_{p_*}^{1-p^*} F(z)_i^{p^*}  }  {\big(\sum_k  F(z)_k^{p^*} \big)^{1/p^*} } \cdot \frac{\D  F(z)_{ij}}{F(z)_i}\right|\\
&=: |C_{ij} - \sum_i \gamma_i C_{ij}|,
\end{aligned}
\end{align*}
where $C_{ij} = \D  F(z)_{ij} /  F(z)_i$ and $\gamma_i =  \| F(z)\|_p^{-1} F(z)_i^p$. Notice that $C_{ij}$ and $\gamma_i$ are all nonnegative numbers (for all $i,j$) and that $\sum_i \gamma_i=1$. Therefore, 
\[
|p-1|\Big|\Big(\Diag H(x)^{-1} |\D H(x)|\Big)_{ij}\Big|=|C_{ij} - \sum_i \gamma_i C_{ij}| \leq \max_{i=1,\dots,n} C_{ij} 
\]
As a consequence, if $i_\star$ is the index such that $C_{i_\star j} = \max_{i=1,\dots,n} C_{ij}>0$,  for any $z\succeq 0$, we obtain 
\begin{align}\label{eq:lipschitz_bound}
\begin{aligned}
\Big\|\Diag H(z)^{-1}|\D H(z)|\Diag(z)\Big\|_\infty %= 
%\|\text{diag}(H(x))^{-1} |\D H(x)|x\|_\infty 
&= \max_i \sum_j \left|\Big(\Diag(H(z))^{-1} |\D H(z)|\Big)_{ij}z_j \right| \\
&\leq \frac 1 {|p-1|} \sum_j \big| C_{i_\star j}z_j\big| 
= \frac 1 {|p-1|} \big|  \sum_j C_{i_\star j}z_j\big|  \\
&=  %\frac 1 {|p-1|} \left| \sum_j \frac{\D  F(z)_{i_\star j}z_j}{ F(z)_{i_\star}}\right| =
\frac 1 {|p-1|} \left|\frac{\big(\D  F(z)z\big)_{i_\star}}{ F(z)_{i_\star}}\right| \, .%\leq  \left|A_{ij}\right|
\end{aligned}
\end{align}
Now, recall that for a generic node $i$ we have $F(z)_i = z_i^{q-1}\sum_{e:i\in e}\xi(e)(\sum_{k\in e}z_k^q)^{1/q-1}$. Thus, if $\delta_{ij}$ denotes the Kronecker delta ($\delta_{ij}=1$ if $i=j$,  $\delta_{ij}=0$ otherwise), we have \begin{align*}
    \frac {\partial F(z)_i}{\partial z_j} &= (q-1)\Big\{\delta_{ij}z_i^{q-2}\sum_{e:i\in e}\xi(e)\big(\sum_{k\in e}z_k^q\big)^{\frac 1 q -1} - z_i^{q-1}\sum_{e: i,j\in e}\xi(e)\big(\sum_{k\in e}z_k^q\big)^{\frac 1 q -2}z_j^{q-1}\Big\}\\
    % &=(q-1)\Big\{
    % x_i^{q-2}\sum_{e:i\in e}\xi(e)\big(\sum_{k\in e}x_k^q\big)^{\frac 1 q -2}\Big[\delta_{ij}\sum_{k\in e} x_k^q - x_i x_j^{q-1}\Big]    \Big\} \\
    % &= (q-1)\Big\{
    % x_i^{q-1}\sum_{e:i\in e}\xi(e)\big(\sum_{k\in e}x_k^q\big)^{\frac 1 q -1} \frac{\delta_{ij}\sum_{k\in e} x_k^q - x_i x_j^{q-1}}{x_i\sum_{k\in e} x_k^q}    \Big\} \\
    % &= (q-1) \Big\{ F(x)_i \frac{\delta_{ij}\sum_{k\in e} x_k^q - x_i x_j^{q-1}}{x_i\sum_{k\in e} x_k^q}    \Big\}
\end{align*}
so that %for all $x\succ 0$. Therefore 
\begin{align}
\begin{aligned}
     \frac 1 {|q-1|}\left|\frac{\big(\D  F(z)z\big)_{i}}{ F(z)_{i}}\right| &= \frac 1 {|q-1|}\left|\frac{1}{ F(z)_{i} } \sum_j \frac {\partial F(z)_i}{\partial z_j} z_j \right| \\
    &\leq  \frac 1 {F(z)_i} \left\{\left| 
     z_i^{q-1}\sum_{e:i\in e}\xi(e)\big(\sum_{k\in e}z_k^q\big)^{\frac 1 q -1} -
    z_i^{q-1}\sum_{e: i\in e}\xi(e)\big(\sum_{k\in e}z_k^q\big)^{\frac 1 q -2}z_i^{q} \right| \right. \\
    &+\left.\left|    z_i^{q-1}\sum_{j\neq i} \sum_{e: i,j\in e}\xi(e)\big(\sum_{k\in e}z_k^q\big)^{\frac 1 q -2}z_j^{q}\right|\right\} \\
    & = \frac 1 {F(z)_i} \left\{ \left|z_i^{q-1}\sum_{e:i\in e}\xi(e)\big(\sum_{k\in e}z_k^q\big)^{\frac 1 q -1} \frac{\sum_{k\in e\setminus\{i\} }z_i^q} {\sum_{k\in e} z_k^q}\right|\right. \\
    &+ \left.\left|z_i^{q-1} \sum_{j\neq i} \sum_{e: i,j\in e}\xi(e)\big(\sum_{k\in e}z_k^q\big)^{\frac 1 q -1} \frac{z_j^{q}}{\sum_{k\in e}z_k^q} \right|\right\} \\
    &\leq \!\frac 1 {F(z)_i} \left\{z_i^{q-1}\sum_{e:i\in e}\xi(e)\big(\sum_{k\in e}z_k^q\big)^{\frac 1 q -1} + z_i^{q-1} \sum_{j\neq i} \sum_{e: i,j\in e}\xi(e)\big(\sum_{k\in e}z_k^q\big)^{\frac 1 q -1}\right\} \! =\! 1
    \end{aligned}   \label{eq:lip_2}
\end{align}
where the final inequality follows from the fact that, for $z\succ 0$, we have 
\[
0\leq \frac{\sum_{k\in e\setminus\{i\} }z_i^q} {\sum_{k\in e} z_k^q} \leq 1 \qquad \text{and} \qquad 0\leq \frac{z_j^{q}}{\sum_{k\in e}z_k^q} \leq 1 \, .
\]
Finally, combining  \eqref{eq:lipschitz_bound} with \eqref{eq:lip_2}, we obtain that
$$
\sup_{z \in \Omega(x,y) }\|\mathcal F(z)\|_\infty  = \sup_{z\in \tilde \Omega} \Big\|\Diag H(z)^{-1}|\D H(z)|\Diag(z)\Big\|_\infty \leq |q-1|/|p-1|
$$
concluding the proof. 
% \end{proof}

% \section*{Acknowledgments}
% FUNDING GOES SOMEWHERE ELSE

% \bibliographystyle{siamplain}
% \bibliography{references}

\end{document}